%% file: main.tex
\input{header}

\begin{document}

\date{}
\maketitle
\begin{abstract}
    We revisit the complexity analysis of the recursive version of the randomized greedy algorithm for computing a maximal independent set (MIS), originally analyzed by Yoshida, Yamamoto, and Ito (2009).
    They showed that, on average per vertex, the expected number of recursive calls made by this algorithm is upper bounded by the average degree of the input graph. 
    While their analysis is clever and intricate, we provide a significantly simpler alternative that achieves the same guarantee.

    Our analysis is inspired by the recent work of Dalirrooyfard, Makarychev, and Mitrović (2024), who developed a potential-function-based argument to analyze a new algorithm for correlation clustering.
    We adapt this approach to the MIS setting, yielding a more direct and arguably more transparent analysis of the recursive randomized greedy MIS algorithm.
\end{abstract}

\section{Introduction}
Given a graph $G = (V, E)$, a subset of vertices $I$ is called \emph{independent} if it contains no pair of adjacent vertices.
If, in addition, $I$ cannot be augmented by adding new vertices to it while remaining independent, $I$ is a \emph{maximal independent set} (MIS).
While the maximum independent set problem is NP-hard, MIS admits a simple linear-time greedy algorithm.
In the context of sublinear-time algorithms, in 2008, Nguyen and Onak~\cite{nguyen2008constant} asked how efficiently one can determine whether a given vertex belongs to a maximal independent set (MIS).\footnote{In this setting, we want to ensure the so-called consistency. Namely, if this algorithm is executed independently on each vertex of $V$ in parallel, then those vertices for which the algorithm says they are in MIS, they altogether form an MIS of the input graph.}
For that purpose, they studied the \emph{randomized greedy MIS} (RGMIS) approach -- outlined as Algorithm~\ref{alg:RGMIS-sequential} -- and analyzed its recursive version, which can be phrased as follows.
Given a vertex $v$ and a permutation $\pi$, recursively test whether any of the neighbors $u$ of $v$ with $\pi(u) < \pi(v)$ is in MIS. 
If it is, return that $v$ is not in MIS; otherwise, return that $v$ is in MIS.
Nguyen and Onak showed that, on graphs with maximum degree $\Delta$, this recursive algorithm, when executed from an arbitrary vertex, makes at most $2^{O(\Delta)}$ recursive calls in expectation, where the expectation is taken over all permutations $\pi$.

\begin{algorithm}[h]
    \begin{algorithmic}[1]
    \State  Let $\pi : V \to \{1, \ldots, n\}$ be a random permutation of vertices.

    \State  $I \gets \emptyset$

    \For{$t = 1 \ldots n$}
        \If{no neighbor of $\pi^{-1}(t)$ is in $I$}
            \State  Add $\pi^{-1}(t)$ to $I$: $I\gets I\cup\{\pi^{-1}(t)\}$.
        \EndIf
    \EndFor

    \State \Return $I$      

    \end{algorithmic}

    \caption{\phantom{\Big|}The randomized greedy MIS algorithm. Returns the MIS produced by the randomized greedy process.
    \label{alg:RGMIS-sequential}}
\end{algorithm}

In the same work, Nguyen and Onak conjectured that the following natural heuristic optimization yields a faster algorithm for testing whether a vertex is in MIS or not: in the recursive test, $v$ examines its neighbors in increasing order with respect to $\pi$ and returns that $v$ is not in MIS upon encountering \textbf{the first neighbor that has already been included in MIS}.
This version is given in Algorithm~\ref{alg:RGMIS}. 

\begin{algorithm}[h]
    \begin{algorithmic}[1]
    \Statex \textbf{function} \FRGMIS{$(v, \pi)$}
    \State Let $w_1, \ldots, w_r$ denote the neighbors $w$ of $v$ such that $\pi(w) < \pi(v)$, ordered increasingly by $\pi(w)$. 
    \For{$i = 1 \ldots r$}
        \If{$\FRGMIS(w_i, \pi) = \true$}
            \State  \Return \false
        \EndIf
    \EndFor
    \State \Return \true
    \end{algorithmic}
    \caption{\phantom{\Big|}A recursive version of a randomized greedy MIS algorithm (Algorithm~\ref{alg:RGMIS-sequential}) with a greedy heuristic optimization.
    The algorithm returns whether a given vertex $v$ is in the MIS or not.
    \label{alg:RGMIS}}
\end{algorithm}
In 2009, Yoshida, Yamamoto, and Ito~\cite{yoshida2009improved} provided an ingenious analysis showing that this conjecture is not only true, but executing Algorithm~\ref{alg:RGMIS} from \textbf{every} vertex in the graph requires at most $m$ recursive calls in total, in expectation over $\pi$.
That result is typically stated in the literature as an average-case guarantee: in expectation over all vertices $v \in V$ and all $\pi$, $\FRGMIS(v, \pi)$ performs at most $m / n \le \Delta/2$ recursive calls.

Over recent years, the analysis and result of Yoshida, Yamamoto and Ito has been used in and has inspired several important discoveries, including developing sublinear-time algorithms for approximating the minimum vertex cover~\cite{OnakRRR12} and the maximum matching size~\cite{behnezhad2022time}, designing a more efficient MPC algorithm for maximal matching~\cite{behnezhad2019exponentially}, obtaining a very fast parallel algorithm for almost $3$-approximate correlation clustering~\cite{behnezhad2022almost}, analyzing average sensitivity of graph algorithms~\cite{varma2023average}, and developing an algorithm for stochastic matching~\cite{azarmehr2025stochastic}.
For instance, Behnezhad, Charikar, Ma, and Tan~\cite{behnezhad2022almost} build, in a clever way, on the analysis of \cite{yoshida2009improved} to show that a truncated version of the celebrated Pivot algorithm~\cite{ailon2008aggregating} yields an almost $3$-approximation for correlation clustering.\footnote{RGMIS and Pivot are the same algorithm, known as RGMIS when used to compute an MIS and as Pivot when used to compute a 3-approximate correlation clustering.}
This truncation is the key reason why their algorithm can be implemented in small depth in parallel settings and in a small number of passes in the semi-streaming model.
We believe that our work opens avenues for novel results and alternative analyses for these and related problems.

\subsection{Our contribution}

The analysis of RGMIS by Yoshida, Yamamoto, and Ito~\cite{yoshida2009improved} is clever but rather intricate. In this paper, we present a simpler analysis of the algorithm that yields the matching upper bound on the number of recursive calls. Our approach builds on the recent work of Dalirrooyfard, Makarychev, and Mitrovi{\'c}~\cite{dalirrooyfard2024pruned}, who introduced a Pruned Pivot algorithm for Correlation Clustering and analyzed it using a specially crafted potential function. We adapt their analysis to RGMIS and slightly modify the potential function to obtain a stronger bound.

As the main technical result, we prove the following.
\begin{theorem}\label{thm:main-edge-query}
    Let $G = (V, E)$ be a graph and let $\pi : V \to \{1, \ldots, n\}$ be a uniformly random permutation of its vertices.
    Then, across all invocations $\FRGMIS(u, \pi)$ for $u \in V$, the following holds:
    for every \emph{ordered} pair $(a, b)$ with $(a,b) \in E$, the expected number of times $\FRGMIS(b, \pi)$ directly invokes $\FRGMIS(a, \pi)$ is at most 
$1/2$.
\end{theorem}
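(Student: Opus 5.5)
The plan is to prove the bound by induction on $n$, peeling off the vertex of smallest rank, in the spirit of the potential-function argument of \cite{dalirrooyfard2024pruned}. For a fixed $\pi$, let $C_\pi(b\to a)$ denote the number of times $\FRGMIS(b,\pi)$ directly invokes $\FRGMIS(a,\pi)$, summed over all top-level calls $\FRGMIS(u,\pi)$, $u\in V$. Let $x=\pi^{-1}(1)$. Two observations drive the argument.
\begin{itemize}
\item $x$ is in the MIS, and every call to $\FRGMIS(y,\pi)$ with $y\in N(x)$ first calls $x$ (which returns \true\ immediately) and then stops. So the vertices of $N[x]$ only produce leaf-level calls.
\item Let $G'=G-N[x]$ and let $\pi'$ be the restriction of $\pi$ to $G'$. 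A call $\FRGMIS(v,\pi)$ with $v\notin N[x]$ scans its lower-ranked neighbours. Those in $N(x)$ return \false\ after one extra call, and the others behave exactly as in $G'$. Hence, for $a,b\notin N[x]$, we get $C_\pi(b\to a)=C^{G'}_{\pi'}(b\to a)$. Moreover, conditioned on $x$, $\pi'$ is uniform.
\end{itemize}

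The induction would then split edges $(a,b)$ into three classes: edges inside $G'$, edges with exactly one endpoint in $N[x]$, and edges inside $N[x]$. Edges inside $G'$ are handled by the induction hypothesis. For the others, $C_\pi(b\to a)$ can be written in terms of the number of invocations $\mathrm{Inv}(b)$ of $b$.
\begin{itemize}
\item If $a=x$, then $C_\pi(b\to a)=\mathrm{Inv}(b)$.
\item If $a\in N(x)$ and $b\notin N[x]$, then $C_\pi(b\to a)$ is the number of invocations of $b$ in $G'$ that reach $a$ in the scan.
\item If $b\in N(x)$, $b$ calls only $x$, so $C_\pi(b\to a)=0$ for $a\neq x$.
\end{itemize}
So the plain per-edge statement is not self-reducing: it needs control of $\mathrm{Inv}(b)$ and of how far the scan of $b$ gets. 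I would therefore strengthen the hypothesis with a potential. Following \cite{dalirrooyfard2024pruned}, assign to each ordered edge $(a,b)$ a charge equal to $\frac12$ minus its expected number of calls. The strengthened claim is that the total charge stays non-negative in every graph, together with a per-edge statement of the form
\[
\mathbb E[C_\pi(b\to a)] + \Pr[\text{$a$ or $b$ is removed before the call}]\cdot(\text{correction}) \le \tfrac12 .
\]
The intended use of the charges is to pay for the extra leaf calls into $N(x)$ out of the charges of edges deleted along with $N[x]$. Averaging over the uniformly random choice of $x$ then gives a recurrence in which each deleted edge is charged with probability proportional to its endpoints being the first vertex or its neighbours.

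The main obstacle is the conditioning. Fixing which vertex is first, and hence which edges survive, biases the relative order of $a$, $b$ and their neighbours. So the induction hypothesis on $G'$ has to be applied to a uniformly random sub-permutation while the contribution of the removed vertices is accounted for exactly. My first attempt would be a symmetric pairing: prove $\mathbb E[C(b\to a)+C(a\to b)]\le 1$ using the exchange $\pi\mapsto\pi\circ(a\,b)$. Under this swap, a call $b\to a$ with $\pi(a)<\pi(b)$ maps to a configuration in which $a$ ranks below $b$. I would then check that the losses on one side are dominated by the gains on the other. Only after that would I refine the bound to the directed $\frac12$ by tracking which endpoint is reached first in the peeling.
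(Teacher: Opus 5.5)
You have a genuine gap: the proposal never states the invariant that the induction is meant to carry, so nothing is actually proved. The ``correction'' term in your strengthened per-edge inequality is left unspecified. Non-negativity of the total charge $\sum_{(a,b)}\bigl(\tfrac12-\mathbb{E}[C_\pi(b\to a)]\bigr)$ is just the corollary (at most $m$ calls in total), and it cannot give a per-edge bound. The swap $\pi\mapsto\pi\circ(a\,b)$ comes with no argument, since exchanging two ranks changes the greedy MIS in an uncontrolled way. Even if it worked, it would only give $\mathbb{E}[C(b\to a)+C(a\to b)]\le 1$, and the passage to the directed $\tfrac12$ is left open.

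Your own case analysis shows where the real difficulty is, and it is not the conditioning (given $x$, the residual order is uniform). The terms $C_\pi(b\to x)=\mathrm{Inv}(b)$ for $b\in N(x)$, and $C_\pi(b\to a)$ for $a\in N(x)$, $b\notin N[x]$, count chains of calls inside $G'$. Each such chain ends with a query from a vertex of $G'$ into a \emph{deleted} vertex whose rank is still random and interleaved with $\pi'$. A hypothesis about the edges of $G'$ says nothing about these queries. They are not a lower-order correction either. Take the star with centre $b$ and leaves $x,\ell_1,\dots,\ell_d$. Here $b$ calls $x$ only when $x$ is first, and then $\mathbb{E}[\mathrm{Inv}(b)]=1+d/2$. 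So all of $\mathbb{E}[C(b\to x)]=\tfrac12$ comes from the peeled term, while $G'$ has no edges at all.

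The missing idea is to keep these pending queries rather than delete them. The paper reveals $\pi^{-1}(1),\pi^{-1}(2),\dots$ one vertex at a time. Alongside completed query paths (chains of calls), it tracks \emph{dangerous} paths $R=(a,b,\dots,u_k,z)$. In such a path the prefix is already a query path, but it is still undecided whether $z$ will query $u_k$: both ranks are unrevealed and $z$ has no revealed neighbour in the MIS. Let $A_R$ be the set of unrevealed neighbours $w\neq u_k$ of $z$ that have no revealed MIS neighbour. The paper shows that $\Phi_t(a,b)=\#\{\text{query paths}\}+\tfrac12\#\{\text{dangerous paths}\}$, counting paths starting with $(a,b)$, is a supermartingale. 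Given $\mathcal F_t$, each dangerous $R$ completes with probability $1/(n-t)$, namely when $\pi^{-1}(t+1)=u_k$, and this creates at most $|A_R|$ new dangerous extensions $(R,w)$. It dies with probability $(|A_R|+1)/(n-t)$, namely when $\pi^{-1}(t+1)$ is $z$ or some $w\in A_R$; such a $w$ joins the MIS and blocks $z$. Hence the expected change in the number of dangerous paths is at most $-2/(n-t)$ per path, which exactly cancels the $+1/(n-t)$ from completions. Since $\Phi_0(a,b)=\tfrac12$ and $\Phi_n(a,b)$ is the number of times $b$ calls $a$, the theorem follows.

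In your inductive language, the right strengthened hypothesis is: for every dangerous path $R$, the expected number of eventual query paths extending $R$ (including $R$) is at most $\tfrac12$. The same one-step case analysis proves it by induction on $N=n-t$, the number of unrevealed vertices, because $\tfrac{1}{N}\bigl(1+\tfrac{|A_R|}{2}\bigr)+\tfrac{N-|A_R|-2}{N}\cdot\tfrac12=\tfrac12$.
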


\begin{corollary}
    Let $G = (V, E)$ be a graph, let $\pi : V \to \{1, \ldots, n\}$ be a uniformly random permutation of its vertices, and let $u$ be a uniformly random vertex in $V$.
    Then, in expectation over the choice of $u$ and $\pi$, the number of recursive calls made by $\FRGMIS(u, \pi)$ is at most $m/n$.\footnote{This does not count the top-level invocations of \FRGMIS, but only its recursive calls. If the top-level invocations are counted as well, then the complexity becomes $m/n + 1$.}
\end{corollary}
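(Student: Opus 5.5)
The corollary follows from Theorem~\ref{thm:main-edge-query} by a counting identity and linearity of expectation. The plan is to charge every recursive call to the ordered edge along which it is made, sum the per-edge bound of the theorem, and then average over the starting vertex.

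First I fix $\pi$ and consider the family of all top-level executions $\FRGMIS(u,\pi)$, $u \in V$, together with every call they trigger. Every recursive call is, by definition of Algorithm~\ref{alg:RGMIS}, a direct invocation of $\FRGMIS(a,\pi)$ from within some execution of $\FRGMIS(b,\pi)$. The callee $a$ is one of the $w_i$, so $a$ is a neighbor of $b$, and the ordered pair $(a,b)$ corresponds to an edge of $E$. Let $X_{a,b}(\pi)$ be the number of times, across all these executions, that $\FRGMIS(b,\pi)$ directly invokes $\FRGMIS(a,\pi)$. Since the top-level invocations are not themselves recursive calls, the total number of recursive calls is exactly
\[
  \sum_{u \in V} R_u(\pi) \;=\; \sum_{(a,b)} X_{a,b}(\pi),
\]
where $R_u(\pi)$ is the number of recursive calls made by $\FRGMIS(u,\pi)$ and the sum on the right ranges over the $2m$ ordered pairs arising from edges. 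The identity holds because each recursive call is attributed to exactly one top-level $u$ and to exactly one ordered edge, namely its (caller, callee) pair.

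Next I take expectations over $\pi$ and apply Theorem~\ref{thm:main-edge-query}, which gives $\mathbb{E}_\pi[X_{a,b}] \le 1/2$ for every ordered pair. By linearity,
\[
  \sum_{u\in V}\mathbb{E}_\pi[R_u] \le 2m\cdot \tfrac12 = m .
\]
Finally, with $u$ uniform on $V$ and independent of $\pi$, $\mathbb{E}_{u,\pi}[R_u] = \frac1n\sum_{u}\mathbb{E}_\pi[R_u] \le m/n$. If the $n$ top-level invocations are also counted, each vertex contributes one more call, giving $m/n + 1$ as in the footnote.

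The only point that needs care is the bookkeeping in the first step. The theorem counts direct invocations "across all invocations $\FRGMIS(u,\pi)$", including nested ones, and this must match counting each call once per top-level run. It does, because the executions for distinct $u$ are independent runs, with no memoization, and the theorem's count aggregates over all of them. All the real difficulty lies in Theorem~\ref{thm:main-edge-query} itself, which I take as given.
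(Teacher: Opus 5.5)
Your proof is correct and follows the paper's argument. The paper likewise identifies the total number of recursive calls over all top-level invocations with the sum, over the $2m$ ordered edges $(a,b)$, of the number of times $b$ directly calls $a$; there this count is phrased as the number of query paths beginning with $(a,b)$. It then bounds this sum by $2m\cdot\tfrac12=m$ using Theorem~\ref{thm:main-edge-query} and averages over the uniformly random $u$ to obtain $m/n$.
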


\section{Counting Recursive Calls using Query Paths}\label{sec:query-paths}

We first show how to bound the expected number of recursive calls in Algorithm~\ref{alg:RGMIS} by analyzing Algorithm~\ref{alg:RGMIS-sequential}. Consider the recursive calls made by Algorithm~\ref{alg:RGMIS} when it is invoked on a vertex~$u$. Suppose the algorithm makes a recursive call from $\FRGMIS(v,\pi)$ to $\FRGMIS(w,\pi)$. We define the oriented edge $(w, v)$ as a \emph{query edge} (although it may seem more natural to call $(v, w)$ a query edge, we adopt this definition for convenience later on).
It is easy to see that any sequence of recursive calls with parameters $u = u_k, u_{k-1}, \dots, u_1$ corresponds to a directed path $(u_1,\dots,u_k)$ in which each edge is a query edge. We refer to such paths as \emph{query paths}; we define this notion formally in the following section. Then, the number of recursive calls for vertex~$u$ equals the number of query paths that end at~$u$. To count recursive calls, we are going to consider the sequential version of Algorithm~\ref{alg:RGMIS} presented in Algorithm~\ref{alg:RGMIS-sequential-early-break}.

In Algorithm~\ref{alg:RGMIS-sequential}, we did not specify how to determine whether any neighbor of $\pi^{-1}(t)$ is in~$I$. This detail is crucial for our proof: for the proof to go through, we need to consider the neighbors of $\pi^{-1}(t)$ in the order given by~$\pi$, as shown in Algorithm~\ref{alg:RGMIS-sequential-early-break}.
Algorithm~\ref{alg:RGMIS-sequential-early-break} implements the same greedy heuristic but differ in the way they construct the membership dynamic table $\FRGMIS(v, \pi)$: Algorithm~\ref{alg:RGMIS-sequential-early-break} uses a  bottom-up approach, while Algorithm~\ref{alg:RGMIS} uses a top-down approach. Thus, $\FRGMIS(v, \pi)$ returns \texttt{true} if and only if Algorithm~\ref{alg:RGMIS-sequential-early-break} includes $v$ in the set~$I$.
\begin{algorithm}[t]
    \begin{algorithmic}[1]
    \State Let $\pi : V \to \{1, \ldots, n\}$ be a random permutation of vertices.

    \State $I_0 \gets \emptyset$

    \For{$t = 1 \ldots n$}
        \State  $v \gets \pi^{-1}(t)$

        \State  Sort the neighbors of $v$ with respect to their $\pi$ values.

        \For{each neighbor $w$ of $v$ with $\pi(w) < \pi(v)$}
            \If{$w \in I_{t-1}$}
                \State  $I_t \gets I_{t-1}$

                \State  Break from the inner for-loop
            \EndIf
        \EndFor
        \If{no neighbor of $v$ is in $I_{t-1}$}
            \State Add $v$ to $I_t$: $I_t\gets I_{t-1}\cup\{v\}$.
        \EndIf
    \EndFor

    \State \Return $I_n$      

    \end{algorithmic}

    \caption{\phantom{\Big|}The sequential version of 
    Algorithm~\ref{alg:RGMIS}, i.e., the randomized greedy MIS algorithm with the heuristic optimization. 
    Returns the MIS produced by the randomized greedy process.
    \label{alg:RGMIS-sequential-early-break}}
\end{algorithm}

Note that we can identify query paths by observing the behavior of Algorithm~\ref{alg:RGMIS-sequential-early-break} rather than that of Algorithm~\ref{alg:RGMIS}: an edge $(w, v)$ is a query edge if Algorithm~\ref{alg:RGMIS-sequential-early-break} examines vertex~$w$ during the iteration $\pi(v)$ -- the iteration in which it determines whether $v$ is in the MIS. Thus, we can count the number of recursive calls made by Algorithm~\ref{alg:RGMIS} when invoked on vertex~$u$ by running Algorithm~\ref{alg:RGMIS-sequential-early-break}, marking the query edges, and counting the number of directed query paths that end at vertex~$u$.

\section{Bounding the Number of Query Paths}

In this section, we bound the expected number of query paths generated by Algorithm~\ref{alg:RGMIS-sequential-early-break}. We assume that the algorithm performs iteration~$t$ at time~$t$, at which point the value $\pi^{-1}(t)$ is revealed. We denote the randomness revealed in the first $t$ iterations by ${\cal F}_t = (\pi^{-1}(1), \dots, \pi^{-1}(t))$. Note that the behavior of the algorithm during the first $t$ iterations depends only on ${\cal F}_t$.

We begin by defining two notions -- \emph{query paths} and \emph{dangerous paths} at time $t$ -- that play a crucial role in our analysis. 
Loosely speaking, a path $(u_1, \ldots, u_{k-1}, u_k)$ in $G$ is a query path at time~$t$ if, at that time, we can be certain that $u_k$ has already queried or will eventually query $u_{k-1}$, and for every $i \in \{2, \ldots, k-1\}$, $u_i$ has already queried $u_{i-1}$.
A path $(u_1, \ldots, u_{k-1}, u_k, z)$ is \emph{dangerous} at time~$t$ if $(u_1, \ldots, u_{k-1}, u_k)$ is a query path at time~$t$, and $z$ will query $u_k$ with probability strictly between 0 and 1 (conditioned on the randomness ${\cal F}_t$ revealed by time~$t$). Below, we present the formal definitions.

\begin{definition}[Query path]\label{def:query-path}
A path $(u_1, \ldots, u_k)$ with $k \ge 2$ is called a \emph{query path} at time $t$ if the following conditions hold:
\begin{itemize}
    \item If $k \geq 3$, then $(u_1, \ldots, u_{k-1})$ is a query path at time $t$;
    \item $\pi(u_{k-1}) \le t$ and $\pi(u_{k-1}) < \pi(u_k)$; and
    \item No neighbor $w$ of $u_k$ with $\pi(w) < \pi(u_{k-1})$ belongs to the MIS $I_t$ maintained by the algorithm.
\end{itemize}
\end{definition}

\begin{remark}\label{rem:def:query-path}
The sets $I_t$ are monotone with respect to inclusion; that is, 
$I_t \subseteq I_{t+1}$ for all $t$. In particular, once a vertex $w$ is added 
to $I_t$, it remains in $I_{t'}$ for every $t' \ge t$. Moreover, a vertex $w$ 
can only be added to the independent set at time $t = \pi(w)$.
Therefore, the last condition in Definition~\ref{def:query-path} can be 
equivalently restated as:
\begin{itemize}
    \item No neighbor $w$ of $u_k$ with $\pi(w) < \pi(u_{k-1})$ is added to the
    MIS at time $\pi(w)$; that is, $w \notin I_{\pi(w)}$.
\end{itemize}
\end{remark}

Observe that whether a path $P$ is a query path at time $t$ is completely determined by the randomness revealed up to time $t$, namely, the set of vertices $v$ with $\pi(v) \le t$ and their ranks. In other words, the event that $P$ is a query path at time $t$ is measurable with respect to the filtration~$\mathcal{F}_t$. 
Note that the rank of $u_k$ may or may not have been revealed by time $t$. In the latter case, although the exact value of $\pi(u_k)$ is unknown at time $t$, we do know that $\pi(u_k) > t$.

We now show that once a path $P=(u_1,\ldots,u_k)$ becomes a query path at some 
time $t$, it remains a query path throughout the remainder of the algorithm. 
Moreover, if $P$ ever becomes a query path, then it does so at time 
$\pi(u_{k-1})$.

\begin{claim}\label{cl:query-path-t-prime}
If $P = (u_1,\ldots,u_k)$ is a query path at time $t$, then it is also a 
query path at every time\footnote{In this claim, we do not assume that 
$t'<t$ or $t'>t$.} $t' \ge \pi(u_{k-1})$.
\end{claim}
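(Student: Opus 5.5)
We argue by induction on $k$. The key observation is that every condition in Definition~\ref{def:query-path} for a path $(u_1,\ldots,u_k)$ depends on $t$ only through the requirement $\pi(u_{k-1})\le t$ and through the set $I_t$. By Remark~\ref{rem:def:query-path}, the third condition can be restated without reference to $t$: no neighbor $w$ of $u_k$ with $\pi(w)<\pi(u_{k-1})$ satisfies $w\in I_{\pi(w)}$. This restatement is valid as long as $t\ge \pi(u_{k-1})-1$, and in particular whenever $\pi(u_{k-1})\le t$. The reason is that every such $w$ has $\pi(w)\le \pi(u_{k-1})-1\le t$, so $w\in I_t$ iff $w\in I_{\pi(w)}$, because membership is decided at time $\pi(w)$ and never changes afterwards.

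So suppose $P$ is a query path at time $t$ and fix any $t'\ge \pi(u_{k-1})$; we check the three conditions at time $t'$.

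\textbf{Second condition.} From time $t$ we know $\pi(u_{k-1})<\pi(u_k)$, and this fact does not depend on time. The inequality $\pi(u_{k-1})\le t'$ holds by the choice of $t'$.

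\textbf{Third condition.} Both $t$ and $t'$ are at least $\pi(u_{k-1})$, so by the observation above the condition at either time is equivalent to the same time-independent statement. It therefore transfers from $t$ to $t'$.

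\textbf{First condition (for $k\ge 3$).} Here we apply the induction hypothesis to the prefix $(u_1,\ldots,u_{k-1})$, which is a query path at time $t$. This requires $t'\ge \pi(u_{k-2})$. The prefix being a query path at time $t$ gives $\pi(u_{k-2})<\pi(u_{k-1})\le t'$, so the hypothesis applies and the prefix is a query path at time $t'$.

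The base case $k=2$ has no first condition and is covered by the same argument. The only step needing care is the time-independence of the third condition, where we must confirm that every relevant $w$ has rank at most $t'$; that is exactly what $t'\ge\pi(u_{k-1})$ guarantees. This also shows that a path which is ever a query path already is one at time $\pi(u_{k-1})$.
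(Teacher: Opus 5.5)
Your proof is correct and follows the same route as the paper. The structure matches: induction on $k$, the second condition transferred directly, and the third condition made time-independent via Remark~\ref{rem:def:query-path}. You are in fact slightly more careful than the paper on two points it leaves implicit. You check that $t' \ge \pi(u_{k-2})$, so the induction hypothesis really applies to the prefix. You also check that the Remark's restatement is valid at both times, which holds because $\pi(u_{k-1}) \le t$ and $\pi(u_{k-1}) \le t'$.
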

\begin{proof}
We prove this claim by induction on $k$. Since $P$ is a query path at time $t$, 
the prefix $(u_1,\ldots,u_{k-1})$ is a query path at time $t$. By the induction 
hypothesis, it is also a query path at time~$t'$. The condition 
$\pi(u_{k-1}) < \pi(u_k)$ holds because $P$ is a query path at time $t$, and 
the condition $\pi(u_{k-1}) \le t'$ holds by assumption. 
It remains to verify the third condition. By Remark~\ref{rem:def:query-path}, 
this condition is equivalent to requiring that no neighbor $w$ of $u_k$ with 
$\pi(w) < \pi(u_{k-1})$ satisfies $w \in I_{\pi(w)}$. Since this is a 
time-independent condition (its truth depends only on the sets $I_{\pi(w)}$, 
not on $t$ or $t'$), it holds at time $t'$ whenever it holds at time $t$.  Thus all three conditions in Definition~\ref{def:query-path} hold at time $t'$, 
and therefore $P$ is a query path at time $t'$.
\end{proof}

By the previous claim, if a path is a query path at some time $t$, then it is also a query path at time $\pi(u_k)$. At that time the algorithm considers vertex $u_k$, and since no neighbor $w$ with $\pi(w) < \pi(u_{k-1})$ belongs to the independent set $I_t$, it queries vertex $u_{k-1}$. Thus, if a path $(u_1,\dots,u_k)$ is a query path at time $t$, then the following two conditions hold:
\begin{itemize}
 \item If $k \ge 3$, then $(u_1, \ldots, u_{k-1})$ is a query path at time $t$; and
 \item Conditioned on the current state of the algorithm, vertex $u_k$ queries $u_{k-1}$ with probability $1$, that is,
 \[
 \Pr\!\left[u_k \text{ queries } u_{k-1} \mid \mathcal{F}_t\right] = 1.
 \]
\end{itemize}
The converse statement also holds: if a path satisfies the above conditions, then it is a query path. Indeed, suppose that $\pi(u_{k-1}) > t$, meaning that the rank of $u_{k-1}$ has not been revealed by time $t$. Then, conditioned on $\mathcal{F}_t$, there is a positive probability that $\pi(u_{k-1}) > \pi(u_k)$. In that case, $u_k$ would not query $u_{k-1}$. 
Similarly, if there exists a neighbor $w$ of $u_k$ with $\pi(w) < \pi(u_{k-1})$ that belongs to $I_t$, then $u_k$ would certainly not query $u_{k-1}$.

If we replace the second condition with the requirement that this probability is strictly between $0$ and $1$, we obtain the definition of a dangerous path.

\begin{definition}[Dangerous path]
\label{def:dangerous}
A path $P = (u_1, \ldots, u_k, z)$ with $k \ge 1$ is called a 
\emph{dangerous path} at time $t$ if the following conditions hold:
\begin{itemize}
    \item If $k \ge 2$, then $(u_1, \ldots, u_k)$ is a query path at time $t$; and
        \item Conditioned on the current state of the algorithm ${\cal F}_t$, the event that 
$P$ will eventually become a query path (that is, that $z$ will query $u_k$ at 
some future time $t' > t$) has probability strictly between $0$ and $1$.
    \end{itemize}
\end{definition}

As in the case of query paths, whether $P$ is a dangerous path at time $t$ is determined by the randomness revealed up to time $t$. 
Equivalently, the event that $P$ is dangerous at time $t$ is measurable with respect to $\mathcal F_t$.

Our goal is to upper bound the expected total number of query paths in $G$ at time~$n$. As we will see, at time~$0$, the graph contains no query paths but does contain $2m$ dangerous paths. We now examine how query and dangerous paths evolve over time and show that the number of the former plus half the number of the latter does not increase in expectation. As a result, the expected number of query paths after the final iteration of the algorithm is at most $m$. We begin by establishing the following simple relationship between the evolutions of query and dangerous paths.

\begin{claim}[Evolution of query and dangerous paths]\label{claim:query-dangerous-properties}
    Let $\hP = (u_1, \ldots, u_k, z, w)$ be a path and $P = (u_1, \ldots, u_k, z)$ be its prefix.
    \begin{enumerate}
        \item\label{item:query-paths-come-from-dangerous} If $P$ is a newly created query path at time $t+1$, then $P$ is dangerous at time $t$.

        \item\label{item:new-dangerous-path-had-dangerous-prefix-a-moment-ago} If $\hP$ is a newly created dangerous path at time $t+1$, then $P$ is dangerous at time $t$.
    \end{enumerate}
\end{claim}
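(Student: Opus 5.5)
Both parts follow from the characterization of query paths via conditional probabilities given just before Definition~\ref{def:dangerous}: a path $(u_1,\ldots,u_k,z)$ is a query path at time $s$ if and only if its prefix $(u_1,\ldots,u_k)$ is a query path at time $s$ (when $k\ge 2$) and $\Pr[z \text{ queries } u_k \mid \mathcal{F}_s]=1$. Call $p_s = \Pr[z \text{ queries } u_k \mid \mathcal{F}_s]$ the \emph{query probability} at time $s$. Then $P$ is a query path at time $s$ iff the prefix condition holds and $p_s=1$, and it is dangerous at time $s$ iff the prefix condition holds and $0<p_s<1$. The plan is to show, in each part, that at time $t$ the prefix condition holds and $0<p_t<1$.

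\textbf{Part~\ref{item:query-paths-come-from-dangerous}.} Suppose $P$ is a query path at time $t+1$ but not at time $t$. First, the prefix $(u_1,\ldots,u_k)$ is a query path at time $t+1$. By Claim~\ref{cl:query-path-t-prime}, it then remains a query path at every time $t'\ge \pi(u_{k-1})$, so it suffices to show $\pi(u_{k-1})\le t$. Since $P$ is a query path at time $t+1$, we have $\pi(u_k)\le t+1$ and $\pi(u_{k-1})<\pi(u_k)$, hence $\pi(u_{k-1})\le t$. So the prefix condition holds at time $t$.

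Next, $p_t<1$, since otherwise the characterization would make $P$ a query path already at time $t$. For $p_t>0$, note that $p_t$ is the average of $p_{t+1}$ over the revelation of $\pi^{-1}(t+1)$ given $\mathcal{F}_t$. The realized outcome has positive conditional probability and gives $p_{t+1}=1$, so $p_t>0$.

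\textbf{Part~\ref{item:new-dangerous-path-had-dangerous-prefix-a-moment-ago}.} Suppose $\hP$ is newly dangerous at time $t+1$. Then $P$ is a query path at time $t+1$. If $P$ was already a query path at time $t$, we must show that $\hP$ was already dangerous at time $t$, which contradicts the assumption that $\hP$ is newly dangerous. In that case the prefix condition for $\hP$ holds at time $t$, so only $0<q_t<1$ remains, where $q_s=\Pr[w\text{ queries } z\mid \mathcal{F}_s]$. The martingale argument above gives $q_t>0$ from $q_{t+1}>0$. The main obstacle is ruling out $q_t=1$. If $q_t=1$, then $q$ stays $1$ almost surely in the future, and in particular at time $t+1$. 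This contradicts $q_{t+1}<1$.

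Hence $P$ is not a query path at time $t$ but is one at time $t+1$. By Part~\ref{item:query-paths-come-from-dangerous}, $P$ is dangerous at time $t$.
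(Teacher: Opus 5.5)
Your proof is correct and follows the paper's route. Part~1 uses Claim~\ref{cl:query-path-t-prime} to carry the prefix back to time $t$ and then checks the conditional probability. Part~2 argues that $P$ cannot already have been a query path at time $t$ and then invokes Part~1. The main small difference is how you get $p_t<1$: you use the converse of the probability characterization, whereas the paper shows directly that $\pi(u_k)=t+1$, so the outcome $\pi(z)=t+1$ had positive probability at time $t$. You also derive $p_t>0$ and the fact that the extension's probability was already strictly between $0$ and $1$ at time $t$ from the tower property, which the paper leaves implicit.
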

\begin{proof}
    We prove each of the claims separately.

\medskip
    
\paragraph{Proof of \eqref{item:query-paths-come-from-dangerous}.}
Let $P = (u_1, \ldots, u_k, z)$. By the definition of a query path, at time 
$t+1$ we have $\pi(u_1) < \cdots < \pi(u_k) < \pi(z)$ and $\pi(u_k) \le t+1$. 
In particular, $\pi(u_{k-1}) < \pi(u_k) \le t+1$, and hence $\pi(u_{k-1}) \le t$.  
By Claim~\ref{cl:query-path-t-prime}, the prefix $(u_1,\ldots,u_k)$ is therefore 
a query path at time $t$.

To show that $P$ was a dangerous path at time $t$, it remains to prove that the 
probability that $P$ eventually becomes a query path was strictly less than $1$ 
at time $t$. We first observe that $\pi(u_k) = t+1$. Indeed, if $\pi(u_k) \le t$, 
then $P$ would already have been a query path at time $t$, contradicting the 
choice of $t$. Thus, the value $\pi(u_k)$ had not yet been revealed at time $t$. 
Similarly, the value $\pi(z)$ had not been revealed at time $t$. Therefore, at 
time $t$ there was a positive probability that $\pi(z)=t+1$ while 
$\pi(u_k)>t+1$. In this case we would have $\pi(z)<\pi(u_k)$, and hence $z$ 
would never query $u_k$, implying that $P$ would never become a query path.

Thus, conditioned on $\mathcal{F}_t$, the event that $P$ eventually becomes a 
query path has probability strictly less than $1$, and so $P$ was a dangerous 
path at time $t$.

\paragraph{Proof of \eqref{item:new-dangerous-path-had-dangerous-prefix-a-moment-ago}.}
Suppose that $\hP = (u_1,\ldots,u_k,z,w)$ is a dangerous path at time $t+1$. 
By Definition~\ref{def:dangerous}, the prefix $(u_1,\ldots,u_k,z)$ is a query path at time $t+1$, and conditioned on $\mathcal F_{t+1}$ the probability that $\hP$ eventually becomes a query path lies strictly between $0$ and $1$. Observe that the conditional probability that $\hP$ eventually becomes a query path given $\mathcal F_t$ was also strictly between $0$ and $1$  at time $t$. Since $\hP$ is newly created at time $t+1$, it was not dangerous at time $t$. 
Therefore the prefix $(u_1,\ldots,u_k,z)$ could not have been a query path at time $t$. 
Consequently, $(u_1,\ldots,u_k,z)$ is a newly created query path at time $t+1$. 
By Part~\eqref{item:query-paths-come-from-dangerous}, this implies that the prefix $P=(u_1,\ldots,u_k,z)$ was a dangerous path at time $t$, as required.
\end{proof}

\begin{claim}\label{claim:vertex-rank-properties}
    Let $P = (u_1, \ldots, u_k, z)$ be a dangerous path at time $t$. Then the following hold:
    \begin{enumerate}[(i)]
        \item\label{item:pi-u_k-t} $\pi(u_k) > t$ and $\pi(z) > t$, and

        \item\label{item:no-neighbor-of-w-is-in-MIS} No neighbor $v$ of $z$ with $\pi(v) \le t$ belongs to the independent set $I_t$ maintained by the algorithm.
    \end{enumerate}
\end{claim}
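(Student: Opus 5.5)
The plan is to prove both parts by contraposition: we show that if either condition fails, the event ``$z$ eventually queries $u_k$'' already has conditional probability $0$ or $1$ given $\mathcal{F}_t$. That contradicts the definition of a dangerous path (Definition~\ref{def:dangerous}).

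For part~\eqref{item:pi-u_k-t}, we split into two cases.
\begin{itemize}
\item Suppose $\pi(z) \le t$. Iteration $\pi(z)$ has already been executed, so whether $z$ queried $u_k$ is determined by $\mathcal{F}_t$. The probability is therefore $0$ or $1$; more precisely, a query after time $t$ is impossible.
\item Suppose instead $\pi(u_k) \le t$ while $\pi(z) > t$. When $k \ge 2$, $(u_1,\ldots,u_k)$ is a query path at time $t$, and so is $(u_1,\ldots,u_k,z)$ exactly when $\pi(u_k) < \pi(z)$, $\pi(u_k)\le t$, and no neighbor $w$ of $z$ with $\pi(w) < \pi(u_k)$ lies in $I_t$.
\begin{itemize}
\item The first two conditions hold automatically, since $\pi(u_k)\le t<\pi(z)$.
\item The third condition concerns only vertices with rank below $\pi(u_k)\le t$, so it is $\mathcal{F}_t$-measurable by Remark~\ref{rem:def:query-path}.
\end{itemize}
Hence $P$ is either already a query path at time $t$ or will never become one, and the probability is again $0$ or $1$. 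When $k=1$ the same reasoning applies, with the first bullet of Definition~\ref{def:query-path} vacuous.
\end{itemize}
In both cases we contradict dangerousness, so $\pi(u_k)>t$ and $\pi(z)>t$.

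For part~\eqref{item:no-neighbor-of-w-is-in-MIS}, suppose some neighbor $v$ of $z$ with $\pi(v)\le t$ belongs to $I_t$. By part~\eqref{item:pi-u_k-t}, $\pi(u_k)>t\ge\pi(v)$. We will argue that $z$ then never queries $u_k$. At iteration $\pi(z)$, algorithm~\ref{alg:RGMIS-sequential-early-break} scans the earlier neighbors of $z$ in increasing order of $\pi$. Since $v\in I_t\subseteq I_{\pi(z)-1}$ and $\pi(v)<\pi(u_k)$, the scan hits $v$ and breaks before reaching $u_k$, if $u_k$ is even an earlier neighbor at all. Equivalently, the third condition of Definition~\ref{def:query-path} fails for $(u_1,\ldots,u_k,z)$ at every future time. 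So the probability is $0$, which again contradicts dangerousness.

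The main subtlety is the second case of part~\eqref{item:pi-u_k-t}: one has to check carefully that, once $\pi(u_k)$ is revealed, every remaining condition is determined by $\mathcal{F}_t$ (using the monotonicity in Remark~\ref{rem:def:query-path}). The other steps are direct.
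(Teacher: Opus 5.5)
Your proof is correct and follows essentially the same route as the paper: each violation of (i) or (ii) forces the conditional probability (given $\mathcal{F}_t$) that $z$ queries $u_k$ to be $0$ or $1$, which contradicts dangerousness. The differences are only in ordering and explicitness---you dispose of $\pi(z)\le t$ first rather than deriving $\pi(z)>t$ at the end from $\pi(u_k)>t$, and in (ii) you state outright that $\pi(v)\le t<\pi(u_k)$ is what makes $z$'s scan stop before reaching $u_k$, a point the paper leaves implicit.
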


\begin{proof}

Assume, toward a contradiction, that $\pi(u_k) \le t$. Then, by time~$t$, we would be able to determine whether $z$ queries $u_k$ at some time $t' > t$. Specifically, if there exists a neighbor $v$ of $z$ with $\pi(v) \le \pi(u_k)$ and $v \in I_t$, then $z$ does not query $u_k$, since at time $\pi(v) \le t$ we already know that $z$ is not in the independent set. If, on the other hand, no such $v$ exists, then $z$ queries $u_k$ at time $t' = \pi(z) > \pi(u_k)$. In either case, the conditional probability (given $\mathcal{F}_t$) that $P$ eventually becomes a query path is either $0$ or $1$, contradicting the assumption that $P$ is dangerous at time $t$. Hence, $\pi(u_k) > t$.

Similarly, if there exists a neighbor $v$ of $z$ with $\pi(v) \le t$ and $v \in I_t$, then by time~$t$ we know that $z$ does not belong to the maximal independent set, and therefore $z$ will never query $u_k$ at any time $t' > t$. This again contradicts the assumption that $P$ is dangerous at time $t$, and proves part~(ii).

Finally, for $P$ to have a nonzero probability of becoming a query path, we must also have $\pi(z) > t$. Indeed, if $\pi(z) \le t$, then since $\pi(u_k) > t$, we would have $\pi(z) < \pi(u_k)$, and thus $z$ would never query $u_k$. This would again imply that $P$ is not dangerous at time $t$, completing the proof of part~(i).
\end{proof}

We now show that the following potential function -- the number of query paths plus half the number of dangerous paths -- does not increase in expectation.
To this end, for every \emph{directed} edge $(a, b)$ in the graph, we analyze the evolution of the following quantity:
\begin{multline*}  
    \Phi_t(a, b) \eqdef \#[\text{query paths beginning with $(a, b)$ at time $t$}] +\\ + 1/2 \cdot \#[\text{dangerous paths beginning with $(a, b)$ at time $t$}].
\end{multline*}
We prove that $\Phi_t(a,b)$ is a supermartingale with respect to the filtration $(\mathcal{F}_t)$; that is, 
$$\E{\Phi_{t+1}(a,b)\mid {\cal F}_t}\leq \Phi_t(a,b)$$  for all $t$.

At time $0$, every directed edge $(a,b)$ is a dangerous path, since there is a non-zero probability that $b$ will query $a$. At this point, the graph contains no query paths. Thus, $\Phi_0(a, b) = 1/2$.

\begin{observation}\label{obs:Phi_0}
    For each directed edge $(a, b)$, we have $\Phi_0(a, b) = 1/2$.
\end{observation}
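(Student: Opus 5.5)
The plan is to verify directly from the definitions that, at time $0$, the only path beginning with $(a,b)$ that counts toward $\Phi_0(a,b)$ is the directed edge $(a,b)$ itself, and that this edge is dangerous but not a query path. That gives $\Phi_0(a,b) = 0 + \tfrac12\cdot 1 = \tfrac12$.

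\emph{Query paths.} There are none at time $0$. By Definition~\ref{def:query-path}, any query path $(u_1,\dots,u_k)$ needs $\pi(u_{k-1}) \le 0$, which is impossible because ranks lie in $\{1,\dots,n\}$.

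\emph{Longer dangerous paths.} Every dangerous path with $k \ge 2$ has the query path $(u_1,\dots,u_k)$ as a prefix. Since no query paths exist at time $0$, no such dangerous path exists. The only remaining candidate beginning with $(a,b)$ is therefore the length-one path $P=(a,b)$, i.e.\ $u_1=a$, $z=b$, $k=1$. For $k=1$ the first condition of Definition~\ref{def:dangerous} is vacuous.

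\emph{The edge $(a,b)$ is dangerous.} I need to show that, conditioned on the trivial $\mathcal F_0$, the probability that $b$ eventually queries $a$ lies strictly in $(0,1)$.
\begin{itemize}
\item For the upper bound: with positive probability $\pi(b)<\pi(a)$, and then $b$ never queries $a$.
\item For the lower bound: the event $\pi(a)=1$ and $\pi(b)=2$ has positive probability. On this event $a$ is the first vertex in the order, so $a$ is the first lower-ranked neighbor that $b$ examines, and no neighbor of $b$ with rank below $\pi(a)$ exists. Hence $b$ queries $a$ at time $2$. This is exactly the condition of Definition~\ref{def:query-path} for $(a,b)$ at time $2$.
\end{itemize}

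Combining the three steps, $\Phi_0(a,b) = \tfrac12$. There is no real obstacle here. The only care needed is to exclude longer dangerous paths through the query-prefix requirement, and to choose a concrete positive-probability event witnessing that $(a,b)$ eventually becomes a query path.
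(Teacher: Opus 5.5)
Your proof is correct and follows the same route as the paper. The paper also argues that no query paths exist at time $0$ and that the edge $(a,b)$ is dangerous because $b$ queries $a$ with probability strictly between $0$ and $1$; you additionally spell out that longer dangerous paths are ruled out by the query-path prefix requirement, and you give explicit positive-probability events for both bounds.
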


At time $n$, there are no dangerous paths. Thus, we have the following claim. 

\begin{observation}\label{obs:Phi_n}
    For each directed edge $(a, b)$, we have
    \[
        \Phi_n(a, b) = \#[\text{query paths beginning with $(a, b)$ at time $n$}].
    \]
\end{observation}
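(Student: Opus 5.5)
The observation follows by showing that no path can be dangerous at time $n$; the dangerous-path term of $\Phi_n(a,b)$ then vanishes and only the query-path count is left. The cleanest route is through Claim~\ref{claim:vertex-rank-properties}, which we may invoke directly.

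Suppose, for contradiction, that some path $P=(u_1,\ldots,u_k,z)$ starting with $(a,b)$ is dangerous at time $n$. By part~\eqref{item:pi-u_k-t} of Claim~\ref{claim:vertex-rank-properties}, applied with $t=n$, we get $\pi(z)>n$ (and also $\pi(u_k)>n$). But $\pi$ maps $V$ onto $\{1,\ldots,n\}$, so every vertex has rank at most $n$. This is a contradiction, so the set of dangerous paths at time $n$ is empty.

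As a sanity check, the same conclusion follows directly from Definition~\ref{def:dangerous}. At time $n$ the filtration $\mathcal{F}_n$ reveals the entire permutation $\pi$, so the run of Algorithm~\ref{alg:RGMIS-sequential-early-break} is fully determined. Hence any event about its behavior, in particular whether $z$ queries $u_k$, has conditional probability $0$ or $1$ given $\mathcal{F}_n$, never strictly between. Moreover, no future time $t'>n$ exists at which a new query could occur.

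With the dangerous count equal to zero, the definition of $\Phi_n(a,b)$ reduces to the number of query paths beginning with $(a,b)$ at time $n$, which is the claimed identity. I expect no real obstacle here: the only point needing care is citing the rank bound from Claim~\ref{claim:vertex-rank-properties} rather than re-deriving it.
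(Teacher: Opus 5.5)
Your proposal is correct and is the same argument as the paper's, which simply notes that no path is dangerous at time $n$, so the dangerous-path term of $\Phi_n(a,b)$ vanishes. Either of your two justifications makes that one-line assertion explicit: the rank bound $\pi(z)>n$ from Claim~\ref{claim:vertex-rank-properties}, or the fact that $\mathcal{F}_n$ reveals all of $\pi$, so no conditional probability lies strictly in $(0,1)$.
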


We now prove the main technical result of this paper.

\begin{theorem}\label{thm:supermartingale}
    The process $\Phi_t(a, b)$ is a supermartingale.
\end{theorem}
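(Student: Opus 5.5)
The plan is to fix $t$ and $\mathcal{F}_t$, let $R=n-t$ be the number of unrevealed vertices, and use that $v=\pi^{-1}(t+1)$ is uniform among them. I would then write $\Phi_{t+1}(a,b)-\Phi_t(a,b)$ as a sum of contributions, one for each path that is dangerous at time $t$. By Claim~\ref{cl:query-path-t-prime}, query paths never disappear. By Claim~\ref{claim:query-dangerous-properties}, every newly created query path at time $t+1$ was dangerous at time $t$, and every newly created dangerous path $(u_1,\ldots,u_k,z,w)$ has a prefix $P=(u_1,\ldots,u_k,z)$ that was dangerous at time $t$ and became a query path at time $t+1$ (this is the proof of part~\eqref{item:new-dangerous-path-had-dangerous-prefix-a-moment-ago}). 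So every change in $\Phi$ is charged to a unique dangerous path $P$ at time $t$. It then suffices to show, for each such $P$ separately, that the expected charged change is $\le 0$; summing over $P$ gives the supermartingale inequality.

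Fix a dangerous $P=(u_1,\ldots,u_k,z)$. By Claim~\ref{claim:vertex-rank-properties}, both $u_k$ and $z$ are unrevealed, and no revealed neighbor of $z$ lies in $I_t$. I would split on the value of $v$.
\begin{itemize}
\item If $v=u_k$ (probability $1/R$), then $P$ becomes a query path. The third condition of Definition~\ref{def:query-path} holds by Claim~\ref{claim:vertex-rank-properties}(ii), since all neighbors of $z$ ranked below $u_k$ are revealed. The contribution is $+1-\tfrac12=+\tfrac12$, plus $\tfrac12$ for each new dangerous extension $(u_1,\ldots,u_k,z,w)$.
\item If $v=z$ (probability $1/R$), then $\pi(z)<\pi(u_k)$, so $P$ dies and contributes $-\tfrac12$.
\item If $v$ is an unrevealed neighbor $w\neq u_k$ of $z$ that joins the MIS at time $t+1$, then $z$ is excluded before it reaches $u_k$, so $P$ dies and contributes $-\tfrac12$.
\item For any other $v$, $P$ stays dangerous or nothing relevant changes, and the contribution is $0$.
\end{itemize}

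Now I would bound the extensions in the case $v=u_k$. If $u_k$ enters the MIS, then $z$ stops at $u_k$, so no extension is dangerous, and the $+\tfrac12$ is exactly cancelled by the $-\tfrac12$ from $v=z$. If $u_k$ does not enter the MIS, the candidate extensions are the unrevealed neighbors $w\neq u_k$ of $z$. I would pair each such $w$ with the event $v=w$, which also has probability $1/R$. If $w$ would join the MIS when revealed at time $t+1$, that event kills $P$ and contributes $-\tfrac12$, cancelling the $+\tfrac12$ that $w$ earns as an extension. The key observation is that whether the newly revealed vertex joins the MIS is determined by $\mathcal{F}_t$ alone: it joins iff none of its already-revealed neighbors is in $I_t$.

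The main obstacle is the remaining extensions: neighbors $w$ of $z$ that already have a neighbor in $I_t$, so they are certain to stay out of the MIS, yet $z$ may still query them. For these I would first try to argue that such $w$ contribute no net charge, for example by showing they are not really dangerous in the sense that matters. If that fails, I would refine the potential (the abstract mentions the potential is modified from the Pruned Pivot version) so that only extensions through vertices that could still enter the MIS are charged, and check that this refined potential still satisfies Observations~\ref{obs:Phi_0} and~\ref{obs:Phi_n}. Choosing the right accounting for these MIS-blocked neighbors is the step I expect to be the crux.
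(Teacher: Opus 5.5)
You have the paper's skeleton: charge each change of $\Phi$ to a unique dangerous path $P$ at time $t$, split on $v=\pi^{-1}(t+1)$, and cancel the $+\tfrac12$ of each new extension through $w$ against the $-\tfrac12$ of the event $v=w$. But you leave the one nontrivial step open. Your fallback of changing the potential would not help, since the theorem is about this particular $\Phi_t$. The step you call the crux is not an obstacle; it comes from reversing the orientation of the extension. In $(u_1,\ldots,u_k,z,w)$ the new query edge means that $w$ queries $z$, not that $z$ queries $w$. Suppose $w$ has a neighbor $x$ with $\pi(x)\le t$ and $x\in I_t$. When $v=u_k$ we have $\pi(z)>t+1$, so $w$, scanning its lower-ranked neighbors in increasing order of $\pi$, reaches $x$ before $z$ and breaks. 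Thus $w$ never queries $z$, and the extension is not dangerous at time $t+1$. Formally, this is Claim~\ref{claim:vertex-rank-properties}(ii) applied at time $t+1$ to the extension (with $w$ as the last vertex), together with $I_t\subseteq I_{t+1}$. So every dangerous extension created when $v=u_k$ goes through some $w\in A_R$. Here $A_R$ is the set of unrevealed neighbors $w\neq u_k$ of $z$ with no revealed neighbor in $I_t$ (the paper's Conditions (B) and (C)). These are exactly the vertices that join the MIS if revealed at time $t+1$, so your pairing gives an expected charge to $P$ of at most $\frac{1}{n-t}\bigl(\tfrac12+\tfrac12|A_R|\bigr)-\frac{1}{n-t}\cdot\tfrac12-\frac{|A_R|}{n-t}\cdot\tfrac12=0$.

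The same orientation slip makes your subcase ``if $u_k$ enters the MIS, then $z$ stops at $u_k$, so no extension is dangerous'' false. Excluding $z$ from the MIS does not prevent $w$ from querying $z$. Take $w\in A_R$ not adjacent to $u_k$. With positive probability $\pi(z)=t+2<\pi(w)$; then every neighbor of $w$ ranked below $z$ is revealed and outside the MIS, so $w$ queries $z$. Hence the extension through $w$ is dangerous; only neighbors of $u_k$ are blocked, as the paper's footnote notes. The fix is easy: do not split on whether $u_k$ enters the MIS. Instead, in both subcases use the bound $\#\{\text{new dangerous extensions of } P\}\le |A_R|$ together with the pairing above. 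As written, though, that step is incorrect.
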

This theorem and Observation~\ref{obs:Phi_0} imply that
$\E{\Phi_n(a, b)} \le \E{\Phi_0(a, b)} = 1/2$.
Together with Observation~\ref{obs:Phi_n}, this means that, in expectation, there is at most $1/2$ query path beginning with $(a, b)$ at time $n$, i.e., at time when the entire $\pi$ is revealed. 
Thus, the total number of query paths is at most
$2m \times \frac{1}{2} = m,$
in expectation, in the end of the algorithm. As discussed in Section~\ref{sec:query-paths}, the number of query paths starting with $(a, b)$ equals the total number of times $b$ queries $a$ over all invocations of $\FRGMIS(v, \pi)$ for $v \in V$. Therefore, Theorem~\ref{thm:main-edge-query} follows.

\medskip

\begin{proof}[Proof of Theorem~\ref{thm:supermartingale}]
    Let $\cQ_t(a, b)$ and $\cD_t(a, b)$ be the set of query and dangerous paths, respectively, beginning with $(a, b)$ at time $t$. By definition, we have $\Phi_t(a, b) = |\cQ_t(a, b)| + 1/2 \cdot |\cD_t(a, b)|$.
    Our goal is to analyze,
    \begin{align}\label{eq:potential-difference-expanded}
          \Exp\bigl[{\Phi_{t+1}(a, b)\ |\ {\cal F}_t}\bigr]  - \Phi_t(a, b)
         &= \notag
         \Exp\Bigl[|\cQ_{t+1}(a, b)| \ |\ {\cal F}_t\Bigr] - |\cQ_t(a, b)| \\&\phantom{=}+ 1/2 \cdot \rb{\Exp\Bigl[|\cD_{t+1}(a, b)|\ |\ {\cal F}_t\Bigr]  - |\cD_t(a, b)|}
    \end{align}
    aiming to show that this difference is at most $0$.

\paragraph{The analysis of $|\cQ_{t+1}(a, b)| - |\cQ_{t}(a, b)|$, given ${\cal F}_t$.}
By Claim~\ref{claim:query-dangerous-properties}~\eqref{item:query-paths-come-from-dangerous},  
every newly created query path at time $t+1$ must belong to $\cD_t(a, b)$;  
that is, $\cQ_{t+1}(a, b) \setminus \cQ_t(a, b) \subseteq \cD_t(a, b)$.
Consider a dangerous path $P = (a, b, u_1, \ldots, u_k, z)\in\cD_t(a, b)$.
By Claim~\ref{claim:vertex-rank-properties}(i), we have $\pi(u_k)>t$.
By the definition of a query path, $P$ can become a query path at time $t+1$
only if $\pi(u_k)\le t+1$.  
Since $\pi(u_k)>t$, it is possible exactly when $\pi(u_k)=t+1$.
Given ${\cal F}_t$, the vertex $\pi^{-1}(t+1)$ is chosen uniformly among the 
$n-t$ vertices whose ranks are not yet revealed, and therefore  
$\Pr[\pi(u_k)=t+1 \mid {\cal F}_t]=1/(n-t)$.
Hence,
\begin{equation}\label{eq:Q-increment}
    \Exp\big[|\cQ_{t+1}(a, b)| \mid {\cal F}_t\big] - |\cQ_t(a, b)| 
      = \frac{|\cD_t(a, b)|}{\,n - t\,}.
\end{equation}

\paragraph{The analysis of $|\cD_{t+1}(a, b)| - |\cD_{t}(a, b)|$, given ${\cal F}_t$.}
By Claim~\ref{claim:query-dangerous-properties}~\eqref{item:new-dangerous-path-had-dangerous-prefix-a-moment-ago},  
every newly created dangerous path 
$R' = (a, b, u_1, \ldots, u_k, z, w)$ at time $t+1$ has a prefix 
$R = (a, b, u_1, \ldots, u_k, z)$ that belongs to $\cD_t(a, b)$. 
For a fixed dangerous path $R \in \cD_t(a, b)$, let $\Delta_R$ be the net change in the number of dangerous paths caused by $R$ at time $t+1$: namely, the number of new dangerous extensions of $R$ created at time $t+1$, minus one if $R$ ceases to be dangerous (and zero otherwise). Then
\[
|\cD_{t+1}(a,b)| - |\cD_t(a,b)|
=
\sum_{R \in \cD_t(a,b)} \Delta_R.
\]

We now bound $\Exp[\Delta_R \mid \mathcal{F}_t]$. 
To this end, we state three \emph{necessary} conditions for an extension 
$R' = (a, b, u_1, \ldots, u_k, z, w)$ of $R$ to have a nonzero probability of belonging to 
$\cD_{t+1}(a, b) \setminus \cD_t(a, b)$.

Since $R$ is a dangerous path at time $t$, by 
Claim~\ref{claim:vertex-rank-properties}~\eqref{item:pi-u_k-t}, we have
\begin{equation}\tag{A}\label{item:pi-of-vertices-UZ}
\pi(u_k) > t 
\quad \text{and} \quad 
\pi(z) > t .
\end{equation}

If $R'$ is dangerous at time $t+1$, then by 
Claim~\ref{claim:vertex-rank-properties}~\eqref{item:pi-u_k-t} we must have 
$\pi(w) > t+1$. 
For this event to have positive probability conditioned on $\mathcal F_t$, 
it is necessary that
\begin{equation}\tag{B}
\label{item:pi-of-vertices-W}
\pi(w) > t .
\end{equation}

In addition, if $R'$ is dangerous at time $t+1$, then by 
Claim~\ref{claim:vertex-rank-properties}~\eqref{item:no-neighbor-of-w-is-in-MIS}, no neighbor $v$ of $w$ with $\pi(v) \le t+1$ belongs to the MIS $I_{t+1}$. 
For this event to have positive probability conditioned on $\mathcal F_t$, 
it is necessary that
\begin{equation}\tag{C}\label{item:condition-on-w-and-MIS}
\text{no neighbor $v$ of $w$ with $\pi(v) \le t$ belongs to the MIS $I_t$.}
\end{equation}
\input{fig-dangerous-path}

Let $A_R$ be the set of neighbors $w$ of $z$ (excluding $u_k$) satisfying 
Conditions~\eqref{item:pi-of-vertices-W} and~\eqref{item:condition-on-w-and-MIS}. See Figure~\ref{fig:dangerous-path}.  
By Conditions~\eqref{item:pi-of-vertices-UZ} and ~\eqref{item:pi-of-vertices-W}, none of the vertices in  
$A_R \cup \{z, u_k\}$ has its rank revealed by time $t$.  
Hence each of these vertices equals $v^* = \pi^{-1}(t+1)$ with probability $1/(n-t)$. We consider four cases:
\begin{itemize}
\item \textbf{Case $v^* = u_k$.}  
In this case, $R$ becomes a query path and therefore ceases to be dangerous.  
For each $w \in A_R$, the extension 
$(a, b, u_1, \ldots, u_k, z, w)$ may become dangerous\footnote{The only situation in which an extension 
$(a,b,u_1,\ldots,u_k,z,w)$ does not become dangerous is when both 
(i) $u_k$ is adjacent to $w$, and 
(ii) $u_k$ belongs to the MIS $I_{t+1}$. 
In particular, if there is no edge between $u_k$ and $w$, then the extension 
$(a,b,u_1,\ldots,u_k,z,w)$ necessarily becomes dangerous. 
In this case we have $\Delta_R = |A_R| - 1$.
Using this observation, one can verify that if the graph is triangle-free, then 
$\E{\Phi_{t+1}(a,b)\mid\mathcal F_t} = \Phi_t(a,b)$ for every $t$, 
so $\Phi_t(a,b)$ is a martingale (rather than merely a supermartingale). 
It follows that the bound in Theorem~\ref{thm:main-edge-query} is tight for triangle-free graphs, 
and this bound is strict whenever the graph contains a triangle.
}. 
Thus,
\[
\Delta_R \le |A_R| - 1.
\]
This case occurs with probability $1/(n-t)$.

\item \textbf{Case $v^* \in A_R$.}  
Then $v^*$ joins the MIS at time $t+1$.  
Since $v^*$ is a neighbor of $z$ with $\pi(v^*) < \pi(z)$,  
vertex $z$ will never query $u_k$.  
Hence $R$ ceases to be dangerous and no new dangerous extension of $R$ is created.  
Thus,
\[
\Delta_R = -1.
\]
This case occurs with probability $|A_R|/(n-t)$.

\item \textbf{Case $v^* = z$.}  
Here $\pi(z) < \pi(u_k)$, so $z$ will never query $u_k$.  
Again, $R$ ceases to be dangerous and no extension is created.  
Thus,
\[
\Delta_R = -1.
\]
This case occurs with probability $1/(n-t)$.

\item \textbf{Case $v^* \notin A_R \cup \{z,u_k\}$.}  
In this case, no new dangerous extension of $R$ is created and $R$ remains to be dangerous.  
Hence,
\[
\Delta_R = 0.
\]

\end{itemize}

Combining the four cases, we obtain
\[
\E{\Delta_R \mid \mathcal{F}_t}
\le
\frac{|A_R| - 1}{n-t}
-
\frac{|A_R|}{n-t}
-
\frac{1}{n-t}
=
-\frac{2}{n-t}.
\]
Summing over all $R \in \cD_t(a,b)$ gives
\[
\E{
|\cD_{t+1}(a,b)| - |\cD_t(a,b)|
\;\middle|\;
\mathcal{F}_t
}
=
\sum_{R \in \cD_t(a,b)}
\Exp[\Delta_R \mid \mathcal{F}_t]
\le
-\frac{2\,|\cD_t(a,b)|}{n-t}.
\]
Combining this inequality with Equations~\ref{eq:potential-difference-expanded} and~\ref{eq:Q-increment} 
yields
\[
    \E{
        \Phi_{t+1}(a,b) \mid \mathcal{F}_t
    }
    - \Phi_t(a,b)
    \le 0,
\]
as desired.
\end{proof}

\section*{Acknowledgements}
We are grateful to Ronitt Rubinfeld for encouraging us to write this article and for helpful discussions about this problem. 
We also thank Jane Lange for insightful conversations and the anonymous reviewers for their valuable feedback on our submission. 
We thank Yuichi Yoshida and Senem Işık for asking for which graphs our analysis yields a tight bound.

\bibliographystyle{siamplain}
\bibliography{ref}
\end{document}

%% file: header.tex
\documentclass[11pt]{article}
\usepackage[english]{babel}
\usepackage[utf8]{inputenc}
\usepackage[T1]{fontenc}
\usepackage[margin=1in]{geometry}
\usepackage{nicefrac}
\usepackage{amsmath,amsthm}
\usepackage{amsfonts}
\usepackage{mathabx}
\usepackage{graphicx}
\usepackage{todonotes}
\usepackage{color}
\usepackage[dvipsnames]{xcolor}
\usepackage{enumerate}
\usepackage{hyperref}
\hypersetup{
    unicode=false,          
    colorlinks=true,        
    linkcolor=Blue,          
    citecolor=purple,        
    filecolor=magenta,      
    urlcolor=cyan           
}

\usepackage{algorithm}
\usepackage[noend]{algpseudocode}
\usepackage[mathscr]{euscript}
\usepackage{xspace}

\newtheorem{theorem}{Theorem}[section]

\newtheorem{corollary}[theorem]{Corollary}

\newtheorem{definition}[theorem]{Definition}
\newtheorem{remark}[theorem]{Remark}

\newtheorem{observation}[theorem]{Observation}
\newtheorem{claim}[theorem]{Claim}

\usepackage[capitalize,nameinlink]{cleveref}
\usepackage{soul}
\crefname{theorem}{Theorem}{Theorems}
\Crefname{lemma}{Lemma}{Lemmas}
\Crefname{invariant}{Invariant}{Invariants}
\Crefname{claim}{Claim}{Claims}
\crefname{observation}{Observation}{Observations}
\Crefname{observation}{Observation}{Observations}
\Crefname{algorithm}{Algorithm}{Algorithms}
\Crefname{figure}{Figure}{Figures}

\newcommand{\eqdef}{\stackrel{\text{\tiny\rm def}}{=}}
\newcommand{\E}[1]{{\mathbb{E}}\left[#1\right]}

\newcommand{\Exp}{{\mathbb{E}}}

\newcommand{\rb}[1]{\left(#1\right)}

\newcommand{\cD}{\mathcal{D}}

\newcommand{\cQ}{\mathcal{Q}}
\newcommand{\hP}{\hat{P}}

\newcommand{\true}{\textsc{true}\xspace}
\newcommand{\false}{\textsc{false}\xspace}

\newcommand{\FRGMIS}{\texttt{RGMIS}\xspace}

\title{A Simple Average-case Analysis of \\ Recursive Randomized Greedy MIS}
\author{Mina Dalirrooyfard \and
Konstantin Makarychev\thanks{Supported by the NSF Awards CCF-1955351 and EECS-2216970.}
\and
Slobodan Mitrović\thanks{Supported by NSF CAREER Award No.~2340048.
Part of this work was conducted while S.M.~was visiting the Simons Institute for the Theory of Computing.}}
\date{}

%% file: fig-dangerous-path.tex
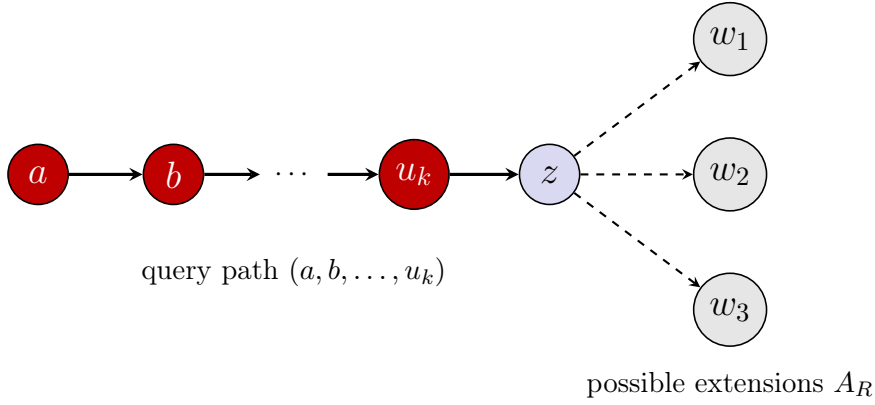
\begin{figure}[t]
\centering
\begin{tikzpicture}[
    >=stealth,
    every path/.style={->, very thick},
    every node/.style={
        circle,
        minimum size=8mm,
        draw,
        line width=0.6pt   
    },
    rednode/.style={
        fill=red!75!black,
        text=white
    },
    znode/.style={
        fill=blue!50!gray!20
    },
    wnode/.style={
        fill=gray!20
    },
]

\node[rednode] (u1) at (0,0) {\Large$a$};
\node[rednode] (u2) at (1.8,0) {\Large$b$};
\node[draw=none] (dots) at (3.4,0) {$\cdots$};
\node[rednode] (uk) at (5,0) {\Large$u_k$};
\node[znode] (z) at (6.8,0) {\Large$z$};

\draw (u1) -- (u2);
\draw (u2) -- (dots);
\draw (dots) -- (uk);
\draw (uk) -- (z);

\node[wnode] (w1) at (9.2,1.8) {\Large${w_1}$};
\node[wnode] (w2) at (9.2,0)   {\Large${w_2}$};
\node[wnode] (w3) at (9.2,-1.8){\Large${w_3}$};

\draw[dashed, thick] (z) -- (w1);
\draw[dashed, thick] (z) -- (w2);
\draw[dashed, thick] (z) -- (w3);

\node[draw=none,rectangle] at (3.4,-1.3)
    {query path $(a,b,\dots,u_k)$};

\node[draw=none,rectangle] at (9.2,-2.8)
    {possible extensions $A_R$};

\end{tikzpicture}
\caption{Illustration of a dangerous path $R=(u_1,\ldots,u_k,z)$ and its possible extensions to vertices $w \in A_R$. 
The ranks of the vertices $u_k$, $z$, and $w_i$ have not yet been revealed by time $t$. 
Hence $\pi(u_k), \pi(z), \pi(w_i) > t$, and each of these vertices receives rank $t+1$ with probability exactly $\tfrac{1}{n-t}$.
}
\label{fig:dangerous-path}
\end{figure}